\def\b1{{1\!\!1}}
\def\cH{{\ca H}}
\def\cL{{\ca L}}
\def\cP{{\ca P}}
\def\bC{{\mathbb C}}           
\def\bR{{\mathbb R}}
\def\bS{{\mathbb S}}
\def\gP{{\mathfrak P}}
\def\beq{\begin{eqnarray}}
\def\eeq{\end{eqnarray}}
\newcommand{\ca}[1]{{\cal #1}}         
\newtheoremstyle{thm}
{12pt}
{12pt}
{\itshape}
{}
{\itshape\bfseries}
{}
{1em}
{}
\theoremstyle{thm}
\newtheorem{theorem}{Theorem}
\newtheorem{lemma}[theorem]{Lemma}
\newtheorem{proposition}[theorem]{Proposition}
\newtheorem{definition}[theorem]{Definition}
\begin{document}

\hfill{\sl  UTM 752  June 2012} 
\par 
\bigskip 
\par 
\rm


\par
\bigskip
\large
\noindent
{\bf  Generalized Complex Spherical Harmonics, Frame Functions, and Gleason Theorem}
\bigskip
\par
\rm
\normalsize


\noindent  {\bf Valter Moretti$^{a}$} and {\bf Davide Pastorello$^{b}$}\\
\par
\noindent Department of  Mathematics, University of Trento, via Sommarive 14, 38123 Povo (Trento), Italy.
\smallskip

\noindent $^a$ moretti@science.unitn.it\qquad $^b$ pastorello@science.unitn.it\\
 \normalsize

\par

\rm\normalsize


\rm\normalsize


\par
\bigskip

\noindent
\small
{\bf Abstract}. Consider  a finite dimensional complex Hilbert space $\cH$, with $dim(\cH) \geq 3$,
 define $\bS(\cH):= \{x\in \cH \:|\: ||x||=1\}$, and let $\nu_\cH$ be
the unique regular   Borel positive measure invariant under the action of the unitary operators in $\cH$, with $\nu_\cH(\bS(\cH))=1$.
We prove that if a complex  frame function $f : \bS(\cH)\to \bC$  
 satisfies  $f \in \cL^2(\bS(\cH), \nu_\cH)$, then it 
verifies Gleason's statement: There is a unique linear operator $A: \cH \to \cH$  such that  $f(u) = \langle u| A u\rangle$ for every $u \in \bS(\cH)$. $A$ is Hermitean when 
$f$ is real. No boundedness requirement is thus assumed on $f$  {\em a priori}.
\normalsize

\section{Introduction}
In the absence of superselection rules, the states of a quantum system described in the Hilbert space $\cH$ are defined as generalized probability 
measures $\mu:\gP(\cH)\rightarrow [0,1]$ on the lattice $\gP(\cH)$ of orthogonal projectors in $\cH$. By definition $\mu$ is required to verify (i) $\mu(I)=1$ 
and  (ii) $\mu(\sum_{k\in K} P_k)=\sum_{k\in K} \mu (P_k)$,
where $\{P_k\}_{k\in K} \subset \mathfrak P(\mathcal H)$, with $K$ finite or countable,  is any set satisfying $P_i P_j=0$ for $i\not =j$
and the sum in the right-hand side in (ii) is computed respect to the strong operator topology if $K$ is infinite. \\
Normalized, positive trace-class operators, namely {\em density} or {\em statistical} operators, very familiar to physicists,  define such measures. However, the 
complete characterization of those measures was established by
 Gleason \cite{Gleason}, with a milestone theorem whose proof is unexpectedly difficult.

\begin{theorem}\emph{{\bfseries (Gleason's theorem)}} \label{Gleason}
Let $\cH$ be a (real or complex)  separable Hilbert space with $3\leq dim(\cH) \leq +\infty$. For every generalized probability measures $\mu:\mathfrak P(\cH)\rightarrow [0,1]$, 
there exist a unique positive, self-adjoint trace class operator $T_\mu$, with unit trace, such that:
$$\mu(P)=tr(T_\mu P) \quad \forall P\in \mathfrak P(\cH).$$
\end{theorem}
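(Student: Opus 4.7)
The plan is to reduce Gleason's theorem to the $\cL^2$-complex frame function statement announced in the abstract, via the standard passage from probability measures on the projector lattice to functions on the unit sphere. Given a generalized probability measure $\mu$ on $\gP(\cH)$, define $f_\mu : \bS(\cH) \to [0,1]$ by $f_\mu(u) := \mu(P_u)$, where $P_u$ is the rank-one orthogonal projector onto $\bC u$. For any orthonormal basis $\{e_k\}$, the operators $\{P_{e_k}\}$ are mutually orthogonal projectors summing to $I$ in the strong operator topology, so $\sigma$-additivity of $\mu$ together with $\mu(I)=1$ gives $\sum_k f_\mu(e_k) = 1$. Thus $f_\mu$ is a real frame function of weight $1$, bounded by $1$.

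In the finite-dimensional case, boundedness of $f_\mu$ and finiteness of $\nu_\cH$ immediately give $f_\mu \in \cL^2(\bS(\cH), \nu_\cH)$. Applying the paper's main frame function theorem produces a linear operator $A$ with $f_\mu(u) = \langle u | Au\rangle$ on $\bS(\cH)$, and reality of $f_\mu$ forces $A$ to be Hermitean. Positivity of $\mu$ then yields $A \geq 0$; the frame-sum identity $tr(A) = \sum_k \langle e_k | A e_k\rangle = 1$ gives unit trace (trace-class is automatic in finite dimensions); and $\sigma$-additivity applied to the spectral decomposition of an arbitrary $P \in \gP(\cH)$ yields $\mu(P) = tr(AP)$. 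Uniqueness of $A$ is polarization of the sesquilinear form associated with $u \mapsto \langle u | A u\rangle$.

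For separable infinite dimensions, the plan is to exhaust $\cH$ by an increasing net of closed subspaces of dimension $\geq 3$. For each such $\cK$, the restriction of $\mu$ to $\gP(\cK)$ is a generalized probability measure on $\cK$, and the finite-dimensional case provides a positive self-adjoint operator $T_\cK$ on $\cK$ of trace at most $1$ representing $\mu$ on $\gP(\cK)$. Compatibility of the $T_\cK$ under inclusions $\cK \subset \cK'$ and the uniform trace bound yield, via $\sigma$-additivity and a standard extension argument, a global positive self-adjoint trace-class operator $T_\mu$ of unit trace satisfying $\mu(P) = tr(T_\mu P)$ for all $P \in \gP(\cH)$. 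The real Hilbert space case proceeds analogously, using the real-valued analogue of the $\cL^2$-frame function result. The main obstacle is the finite-dimensional $\cL^2$-frame function theorem itself, which is the technical heart of the paper; once it is in hand, the reductions above are essentially standard.
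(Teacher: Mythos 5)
Your reduction founders at the very first step, where you claim that boundedness of $f_\mu$ together with finiteness of $\nu_\cH$ ``immediately'' gives $f_\mu \in \cL^2(\bS(\cH),\nu_\cH)$. Membership in $\cL^2$ presupposes Borel measurability, and nothing in the definition of a generalized probability measure provides any measurability (or other regularity) of $u \mapsto \mu(P_u)$: $\mu$ is only assumed $\sigma$-additive on the lattice $\gP(\cH)$, with no topological or measure-theoretic hypothesis on how it behaves as the one-dimensional projector varies. Deriving regularity (in fact continuity) of the bounded frame function $f_\mu$ from additivity alone is precisely the hard core of Gleason's original argument, which the introduction of the paper explicitly recalls as ``very difficult to be established.'' So invoking Theorem \ref{ZAZ} here is circular in effect: its $\cL^2$ hypothesis is not a weakening of boundedness but an alternative hypothesis (measurable and square-integrable), and neither hypothesis implies the other. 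Consistently with this, the paper does not offer a proof of Theorem \ref{Gleason} at all; it quotes it from \cite{Gleason} as a known milestone, and its own contribution, Theorem \ref{ZAZ}, runs in a complementary direction (replacing boundedness by an $\cL^2$ assumption for complex finite-dimensional $\cH$), not as an ingredient from which Gleason's theorem can be recovered.

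Two further gaps, secondary to the one above: the paper's machinery (generalized complex spherical harmonics on $U(n)/U(n-1)$, Peter--Weyl for $U(n)$) is specifically complex, so the ``real-valued analogue of the $\cL^2$-frame function result'' you appeal to for real Hilbert spaces is not available in the paper and would need a separate argument; and the infinite-dimensional exhaustion step, while standard once the low-dimensional case is secured (it is the part of Gleason's proof the introduction describes as ``the easiest one,'' via Riesz's theorem and strong convergence of $\sum_{z\in N_P} z\langle z|\cdot\rangle$), still rests on the unproved finite-dimensional input. If you want an actual proof of Theorem \ref{Gleason}, you must either follow Gleason's route (bounded frame functions on real three-dimensional spaces are quadratic forms) or independently establish Borel measurability of $f_\mu$ before Theorem \ref{ZAZ} can be applied --- and that measurability is not easier than the theorem you are trying to prove.
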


\noindent The key-tool exploited in Gleason's proof is the notion of {\em frame function} that will be
the object of this paper. 

\begin{definition}\label{ff}
Let  $\cH$ be a complex Hilbert space and  
$\bS(\cH):= \{\psi\in \cH\:| \: ||\psi|| =1\}$. \\
$f:\bS(\mathcal H)\rightarrow \bC$ is a {\bf frame function} on $\cH$ if  $W_f\in \bC$ exists, 
called {\bfseries weight} of $f$, with:
\begin{equation}\label{weight}
\sum_{x\in N} f(x)=W_f \quad \mbox{for every Hilbertian basis $N$ of $\cH$.}
\end{equation}
(If $\cH$ is non-separable, the series is the integral 
with respect to the measure counting the points of $N$.)
\end{definition} 

\noindent With the hypotheses of Gleason's theorem, the restriction $f_\mu$ of $\mu$ to the set of the projectors 
on one-dimensional 
subspaces is a {\em real} and {\em bounded} frame function.
It is known that on a real  Hilbert space $\cH$ with $dim(\cH)= 3$, a frame function which is bounded (even from below only  or only from above only )   is continuous and can be uniquely represented 
as a quadratic form \cite{Gleason,Dvu}.  
That result is very difficult to be established and is the kernel of the original proof of the Gleason theorem. 
The last non-trivial step in Gleason's proof is passing from $3$ real dimensions to any (generally complex) dimension,  this is done exploiting Riesz theorem,
establishing that there is a unique positive, self-adjoint trace-class operator $T_\mu$ with $tr(T_\mu)=1$ such that $f_\mu(x) = \langle x| T_\mu x\rangle$ for all $x\in \bS(\cH)$.
The final step is the easiest one: if $P\in \gP(\mathcal H)$, there is a Hilbert basis $N$ such that
in the strong operator topology $P= \sum_{z \in N_P} z \langle z|\cdot \rangle$ for some $N_P \subset N$.
So that $\mu(P) = \sum_{z\in N_P} f(z) = tr(PT_\mu)$.

Frame functions are therefore remarkable tools to manipulate generalized measures. However, they are interesting on their own right \cite{Dvu}. An important difference, distinguishing the finite-dimensional case from the infinite-dimensional one, 
is that a frame function on an {\em infinite} dimensional Hilbert space has to be automatically 
bounded \cite{Dvu}. Whereas in the finite-dimensional case  ($dim(\cH) \geq 3$), as proved by Gudder and Sherstnev,
there exist infinitely many unbounded frame functions \cite{Dvu}. The bounded ones are the only  representable 
as quadratic forms.

In the rest of the paper we prove a proposition concerning sufficient conditions to assure that a frame function, on a 
complex finite-dimensional Hilbert space $\cH$, with $dim(\cH) \geq 3$, is 
representable as a quadratic form without assuming the boundedness requirement {\em a priori}. Instead 
we treat the topic from another point of view. The sphere $\bS(\cH)$, up to a multiplicative constant, admits a unique regular Borel measure
invariant under the action of all unitary operators in $\cH$. We prove that, for $dim(\cH) \geq 3$, a
complex frame function $f$  is representable as a quadratic form whenever it  is Borel-measurable and 
belongs to $\cL^2(\bS(\cH), \nu_\cH)$. In particular it holds when $f\in \cL^p(\bS(\cH), \nu_\cH)$  for some $p\in [2,+\infty]$.
The proof is direct  and relies upon the properties of the spaces of generalized complex spherical harmonics
\cite{rudin} and  on some results due to Watanabe \cite{armgenfg} on zonal harmonics, beyond standard facts on Hausdorff compact 
topological group representations (the classic Peter-Weyl theorem).

\section{Generalized Complex Spherical Harmonics}
Let us introduce a $n$-dimensional generalization of  spherical harmonics defined on:
\begin{equation}
\bS^{2n-1} :=\{x\in\bC^n \:|\: ||x||=1\}\:.\label{prod}
\end{equation}
 $\bS^{2n-1} \subset \bR^{2n}$ is, in fact,  a $2n-1$-dimensional real smooth manifold.\\
$\nu_n$ denotes  the $U(n)$-left-invariant regular 
Borel measure on $\bS^{2n-1}$, normalized to $\nu_n(\bS^{2n-1})=1$, obtained from the two-sided 
Haar measure
on $U(n)$ on the  homogeneous space given by the quotient $U(n)/U(n-1) \equiv \bS^{2n-1}$. That measure 
exists and is unique as follows from general results by Mackey (e.g., see  Chapter 4 of \cite{BR},
noticing that both $U(n)$ and $U(n-1)$ are compact and thus unimodular). 

\begin{lemma}\label{lemma0}
$\nu_n (A)>0$ if $A \neq \emptyset$ is an open subset of $\bS^{2n-1}$.
\end{lemma}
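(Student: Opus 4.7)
The plan is the standard compactness-plus-transitivity argument for invariant measures on homogeneous spaces. The core idea is that $\nu_n$ is invariant under a group that acts transitively on $\bS^{2n-1}$, so every nonempty open set can be translated to cover the whole (compact) sphere with only finitely many pieces; since $\nu_n(\bS^{2n-1})=1$, each piece must carry positive mass.

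More concretely, I would proceed as follows. Pick $x_0 \in A$. Since $U(n)$ acts transitively on $\bS^{2n-1}$ (every unit vector can be sent to every other unit vector by a unitary), for each $y \in \bS^{2n-1}$ there exists $U_y \in U(n)$ with $U_y(x_0) = y$. Consequently $y \in U_y(A)$, and $U_y(A)$ is open because $U_y$ is a homeomorphism of $\bS^{2n-1}$. The family $\{U_y(A)\}_{y\in \bS^{2n-1}}$ is therefore an open cover of $\bS^{2n-1}$. By compactness, there exist $y_1,\dots,y_k \in \bS^{2n-1}$ such that
\begin{equation*}
\bS^{2n-1} \;=\; \bigcup_{i=1}^{k} U_{y_i}(A).
\end{equation*}

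Applying $\nu_n$, using countable subadditivity and the $U(n)$-invariance $\nu_n(U_{y_i}(A)) = \nu_n(A)$, we get
\begin{equation*}
1 \;=\; \nu_n(\bS^{2n-1}) \;\leq\; \sum_{i=1}^{k} \nu_n(U_{y_i}(A)) \;=\; k\,\nu_n(A),
\end{equation*}
so $\nu_n(A) \geq 1/k > 0$, as desired.

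There is not really a hard step here; the argument is essentially a one-line consequence of (i) transitivity of the $U(n)$-action on $\bS^{2n-1}$, (ii) compactness of the sphere, and (iii) invariance and finiteness of $\nu_n$. The only points one should be mindful of are that $U_y(A)$ is genuinely Borel measurable (which follows from $U_y$ being a homeomorphism) and that the unitary translates are open (so the compactness argument applies); neither requires any real work beyond invoking the definitions already set up in the excerpt.
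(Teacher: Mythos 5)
Your argument is correct and is exactly the paper's own proof: translate $A$ by unitaries to obtain an open cover (using transitivity), extract a finite subcover by compactness, and conclude via subadditivity and invariance. Nothing further is needed.
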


\begin{proof}
$\{gA\}_{g\in U(n)}$ is an open covering of  $\bS^{2n-1}$. Compactness implies that $\bS^{2n-1} = \cup_{k=1}^ N g_kA$
for some finite $N$. If   $\nu_n (A)=0$, sub-additivity and  $U(n)$-left-invariance would imply $\nu_n(\bS^{2n-1})=0$ that is false.  
\end{proof}
\noindent As $\nu_n$ is  $U(n)$-left-invariant,
\begin{equation}\label{rep}
U(n)\ni g\to D_n(g)\quad \mbox{with $D_n(g)f:=f\circ g^{-1}$ for $f\in L^2(\bS^{2n-1},d\nu_n)$} 
\end{equation}
defines a faithful  unitary representation of $U(n)$ on $L^2(\bS^{2n-1},d\nu_n)$.

\begin{lemma}\label{lemmasc}
For every $n=1,2,\ldots$ the unitary representation (\ref{rep}) is strongly continuous.
\end{lemma}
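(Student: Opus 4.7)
The plan is to verify that, for each $f\in L^2(\bS^{2n-1}, d\nu_n)$, the orbit map $U(n)\ni g\mapsto D_n(g)f$ is continuous. Because $D_n$ is a group homomorphism and each $D_n(g)$ is an $L^2$-isometry (a consequence of the $U(n)$-invariance of $\nu_n$), continuity everywhere reduces to continuity at the identity $e\in U(n)$; i.e.\ it suffices to show that $\|D_n(g)f-f\|_{L^2}\to 0$ as $g\to e$ for every $f\in L^2(\bS^{2n-1}, d\nu_n)$.

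The argument is then the standard $\varepsilon/3$ reduction and rests on two ingredients. First, the density of $C(\bS^{2n-1})$ in $L^2(\bS^{2n-1}, d\nu_n)$: since $\bS^{2n-1}$ is a compact Hausdorff space and $\nu_n$ is a finite regular Borel measure, this is a classical Lusin/Urysohn approximation statement. Second, strong continuity at $e$ for continuous test functions $\ph\in C(\bS^{2n-1})$: compactness of $\bS^{2n-1}$ makes $\ph$ uniformly continuous, and continuity of the joint action map $U(n)\times\bS^{2n-1}\to\bS^{2n-1}$ together with compactness of $\{e\}\times\bS^{2n-1}$ guarantees that, for each $\ep>0$, there is a neighborhood $V$ of $e$ such that $\|g^{-1}x-x\|<\de(\ep)$ uniformly in $x\in\bS^{2n-1}$ for $g\in V$, where $\de(\ep)$ is a uniform-continuity modulus of $\ph$. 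Hence $\|D_n(g)\ph-\ph\|_\infty<\ep$ for $g\in V$, and since $\nu_n(\bS^{2n-1})=1$ this yields $\|D_n(g)\ph-\ph\|_{L^2}\le\ep$.

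Combining the two ingredients: given $f\in L^2$ and $\ep>0$, pick $\ph\in C(\bS^{2n-1})$ with $\|f-\ph\|_{L^2}<\ep/3$, then choose a neighborhood $V$ of $e$ such that $\|D_n(g)\ph-\ph\|_{L^2}<\ep/3$ for $g\in V$. Isometry of $D_n(g)$ gives
\[
\|D_n(g)f-f\|_{L^2}\leq \|D_n(g)(f-\ph)\|_{L^2}+\|D_n(g)\ph-\ph\|_{L^2}+\|\ph-f\|_{L^2}<\ep
\]
for every $g\in V$. This proves continuity at $e$, hence strong continuity of $D_n$ on all of $U(n)$.

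The only delicate point is ingredient two, the uniform control of $\|g^{-1}x-x\|$ as $g\to e$; but this is a direct consequence of the continuity of the $U(n)$-action combined with the compactness of $\bS^{2n-1}$ (equivalently, the fact that $\bS^{2n-1}\cong U(n)/U(n-1)$ carries the quotient topology from the compact group $U(n)$). Once this is in hand, nothing else is harder than the routine $\ep/3$ manipulation above.
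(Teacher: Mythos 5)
Your proof is correct and follows essentially the same strategy as the paper's: reduce to continuity at the identity, establish the claim for continuous functions first, then extend to all of $L^2$ by density and the $\varepsilon/3$ argument using the isometry of $D_n(g)$. The only (minor) difference is that you handle the continuous-function step via uniform continuity and a sup-norm estimate, whereas the paper invokes the dominated convergence theorem; your variant is if anything slightly cleaner, since $\|g^{-1}x-x\|\le\|g-I\|$ for unitary $g$ makes the uniform control immediate.
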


\begin{proof} It is enough proving the continuity at $g=I$.
If $f: \bS^{2n-1}\to \bC$ is continuous, $U(n)\times \bS^{2n-1} \ni (g,u) \mapsto f(g^{-1}u)$ is jointly continuous and thus bounded by $K<+\infty $ since the domain is compact. 
Exploiting Lebesgue dominated convergence theorem as $|f\circ g^{-1}(u) - f(u)|^2 \leq K$ and the constant function $K$ being integrable since
 the measure $\nu_n$ is finite:
$$||D_n(g) f -f||_2^2 = \int_{\bS^{2n-1}} |f\circ g^{-1} - f|^2 d\nu_n \to 0 \quad \mbox{as $g \to I$,}$$
 If $f$ is not continuous, due to Luzin's theorem, there is a sequence of continuous functions $f_n$ converging to $f$ in the norm of $L^2(\bS^{2n-1},d\nu_n)$.
Therefore
$$||f\circ g^{-1}- f||_2\leq ||f\circ g^{-1}- f_n \circ g^{-1}||_2 + ||f_n\circ g^{-1}- f_n||_2 +
||f_n- f||_2\:.$$
If $\epsilon >0$, there exists $k$ with $||f\circ g^{-1}- f_k\circ  g^{-1}||_2= ||f- f_k||_2 <\epsilon/3$
where we have also used the $U(n)$-invariance of $\nu_n$. Since $f_k$ is continuous we can apply the 
previous result getting $||f_k\circ g^{-1}- f_k||_2 < \epsilon/3$ if $g$ is sufficiently close to $I$.   
\end{proof}

We are in a position to define the notion of spherical harmonics we shall use in the rest of the paper.
If, $p,q=0,1,2,\ldots$, $\cP^{p,q}$  denotes the set of polynomials 
$h:\bS^{2n-1}\to \bC$ such that $h(\alpha z_1,...,\alpha z_n)=\alpha^p\overline \alpha^q h(z_1,...,z_n)$ for all $\alpha\in\bC$. The standard Laplacian $\Delta_{2n}$ on $\bR^{2n}$ can be applied to the elements of $\cP^{p,q}$ in terms of decomplexified $\bC^n$. Now, we have the following known result  (see Theorems 12.2.3, 12.2.7 in \cite{rudin} and
 theorem 1.3 in \cite{armgen}):

\begin{theorem}\label{teodec}
If $\cH_{(p,q)}^n:=Ker\Delta_{2n}|_{\cP^{p,q}}$, the following facts hold.\\
{\bf (a)} The orthogonal decomposition is valid,  each $\cH_{(p,q)}^n$ being finite-dimensional and closed:
\begin{equation}
L^2(\bS^{2n-1},d\nu_n)=\bigoplus _{p,q=0}^{+\infty} \cH_{(p,q)}^n.
\end{equation}
{\bf (b)} Every $\cH_{(p,q)}^n$ is invariant and irreducible under the representation (\ref{rep}) of $U(n)$, so that the said representation correspondingly decomposes as
$$D_n(g)=\bigoplus _{p,q=0}^{+\infty} D^{(p,q)}_n(g) \quad \mbox{with $D^{(p,q)}_n(g):= D_n(g)|_{\cH_{(p,q)}^n}$.}$$
{\bf (c)} If $(p,q) \neq (r,s)$ the irreducible representations $D^{p,q}_n$ and $D^{r,s}_n$ are unitarily inequivalent:
 no unitary operator $U: \cH_{(p,q)}^n \to \cH_{(r,s)}^n$ exists such that $UD^{(p,q)}_n(g) = D^{(r,s)}_n(g) U$
for every $g\in U(n)$.
\end{theorem}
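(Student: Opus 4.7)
The strategy is to assemble the decomposition from three ingredients: a Fischer-type decomposition of the bihomogeneous polynomial spaces; density of polynomials in $L^2(\bS^{2n-1}, d\nu_n)$; and a representation-theoretic analysis of the $U(n)$-action on each $\cH^n_{(p,q)}$. Finite-dimensionality of $\cH^n_{(p,q)}$ is automatic since $\cP^{p,q}$ is itself finite-dimensional, and closedness in $L^2$ then follows.

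First I would prove the polynomial decomposition
\begin{equation*}
\cP^{p,q} = \cH^n_{(p,q)} \oplus |z|^2 \, \cP^{p-1,q-1}
\end{equation*}
via the Fischer inner product $(P,Q) := P(\partial_z, \partial_{\bar z})\overline{Q}\big|_{z=0}$, under which multiplication by $|z|^2$ is, up to a constant, adjoint to $\Delta_{2n}/4$. Iterating yields $\cP^{p,q} = \bigoplus_{k=0}^{\min(p,q)} |z|^{2k}\,\cH^n_{(p-k,q-k)}$, and since $|z|^2 \equiv 1$ on $\bS^{2n-1}$, every bihomogeneous polynomial restricts to a sum of elements of various $\cH^n_{(p',q')}$. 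The algebra $\bC[z,\bar z]$ separates points of $\bS^{2n-1}$ and is closed under complex conjugation, so Stone--Weierstrass gives its density in $C(\bS^{2n-1})$, hence in $L^2(\bS^{2n-1}, d\nu_n)$. Restriction to the sphere is injective on each $\cH^n_{(p,q)}$ by $(p+q)$-homogeneity, so $\sum_{p,q} \cH^n_{(p,q)}$ is dense in $L^2$.

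For \textbf{(b)}, $U(n)$-invariance of $\cH^n_{(p,q)}$ is immediate: the change of variables $z \mapsto gz$ preserves bidegree and commutes with $\Delta_{2n}$, since $U(n) \subset O(2n)$. The technical heart is irreducibility, which I would establish via the reproducing-kernel (zonal-harmonic) route. On the finite-dimensional Hilbert space $\cH^n_{(p,q)}$, evaluation at the basepoint $e_1 \in \bS^{2n-1}$ is a bounded functional represented by a zonal element $Z^{(p,q)}_{e_1}$ invariant under the stabilizer $U(n-1)$ of $e_1$. An explicit computation -- Theorem 12.2.7 of \cite{rudin}, or Watanabe's formulas in \cite{armgenfg} involving Jacobi polynomials -- shows that the space of $U(n-1)$-invariants in $\cH^n_{(p,q)}$ is exactly one-dimensional. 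Combined with $U(n)$-transitivity on the sphere and the standard Schur-type argument (the $U(n)$-orbit of $Z^{(p,q)}_{e_1}$ spans $\cH^n_{(p,q)}$, so a proper nontrivial $U(n)$-invariant subspace would yield two linearly independent $U(n-1)$-invariant zonals, a contradiction), this forces $\cH^n_{(p,q)}$ to be irreducible.

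For \textbf{(c)} and the concluding orthogonality: the central circle $\{\lambda I : |\lambda| = 1\} \subset U(n)$ acts on $\cH^n_{(p,q)}$ by the character $\lambda \mapsto \lambda^{q-p}$, from the definition of bidegree together with $D_n(g) f = f \circ g^{-1}$, and this distinguishes any two representations with $p - q \neq r - s$. When $p - q = r - s$ but $(p,q) \neq (r,s)$ one necessarily has $p + q \neq r + s$, and the spherical Laplace--Beltrami operator (a $U(n)$-intertwiner) acts on $\cH^n_{(p,q)}$ by the scalar $-(p+q)(p+q+2n-2)$, separating the two representations. With (b) and (c) in hand, Schur's lemma applied to the $U(n)$-invariant $L^2$ inner product makes the subspaces $\cH^n_{(p,q)}$ mutually orthogonal, and together with the density established above this yields (a) and the block decomposition of $D_n$ in (b). The main obstacle is plainly the irreducibility claim in (b): the rest is a routine Fischer/Stone--Weierstrass computation or central-character bookkeeping, whereas irreducibility demands either the explicit zonal-harmonic analysis cited above or invocation of the highest-weight classification of $U(n)$-irreducibles, under which $\cH^n_{(p,q)}$ carries highest weight $(p, 0, \ldots, 0, -q)$.
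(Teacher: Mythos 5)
The paper does not prove this theorem at all: it is quoted as a known result with a pointer to Theorems 12.2.3 and 12.2.7 of \cite{rudin} and to \cite{armgen}, so there is no internal proof to compare against. Your outline is essentially a reconstruction of Rudin's Chapter 12 argument (Fischer decomposition $\cP^{p,q}=\cH^n_{(p,q)}\oplus |z|^2\cP^{p-1,q-1}$, Stone--Weierstrass density, invariance from $U(n)\subset O(2n)$, and irreducibility from the one-dimensionality of the $U(n-1)$-fixed vectors via the reproducing kernel at $e_1$), and all of those steps are sound; the one-dimensionality of the space of zonal elements is indeed the technical heart and is exactly what the cited Theorem 12.2.7 supplies.

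There is, however, one genuine flaw, in your argument for (c) in the case $p-q=r-s$ but $p+q\neq r+s$. You argue that the spherical Laplace--Beltrami operator commutes with the $U(n)$-action and acts by distinct scalars $-(p+q)(p+q+2n-2)$ and $-(r+s)(r+s+2n-2)$, and that this ``separates the two representations.'' That inference is invalid: an operator commuting with the group action which takes different scalar values on two invariant subspaces does not prevent the two subrepresentations from being unitarily equivalent. (Take $V\oplus V$ with the same irreducible action on both summands and the operator equal to $1$ on the first and $2$ on the second: it commutes with the action, yet the summands are equivalent.) The eigenvalue of $\Delta_{\bS^{2n-1}}$ is an invariant of the $O(2n)$-equivalence class, because $\Delta_{\bS^{2n-1}}$ is built from the $O(2n)$-action, not from the $U(n)$-action alone; a putative $U(n)$-intertwiner $U:\cH^n_{(p,q)}\to\cH^n_{(r,s)}$ is under no obligation to intertwine $\Delta_{\bS^{2n-1}}$. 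The step is easily repaired: replace $\Delta_{\bS^{2n-1}}$ by the image of the Casimir element of $\mathfrak{u}(n)$, whose eigenvalue $p(p+n-1)+q(q+n-1)$ \emph{is} an invariant of the equivalence class and, together with the central character $\lambda\mapsto\lambda^{q-p}$, distinguishes all pairs $(p,q)\neq(r,s)$; alternatively, invoke the distinct highest weights $(p,0,\dots,0,-q)$, which you already mention, or deduce multiplicity one from $\dim\bigl(\cH^n_{(p,q)}\bigr)^{U(n-1)}=1$ via Frobenius reciprocity. Note also that the mutual orthogonality needed for (a) does not actually depend on the defective step: distinct total degrees give orthogonality through the self-adjointness of $\Delta_{\bS^{2n-1}}$, and equal total degree with $(p,q)\neq(r,s)$ forces distinct central characters, hence orthogonality of the corresponding eigenspaces of the unitary $D_n(\lambda I)$.
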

\begin{definition}
For $j\equiv (p,q)$, with $p,q=0,1,2,...$, the {\bf generalized complex spherical harmonics} of order $j$
are the elements of  $\cH_{(p,q)}^n$. 
\end{definition}

\noindent A useful technical lemma is the following.

\begin{lemma}\label{lemma1}
For $n\geq 3$, $\cH^n_{(1,1)}$ is made of the restrictions to $\bS^{2n-1}$ of the polynomials
 $h^{(1,1)}(z,\overline z)=\overline z^t A z$, $A$ being any traceless $n\times n$ matrix and $z\in \bC^n$.
\end{lemma}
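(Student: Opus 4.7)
The plan is to unwind the three ingredients in the definition of $\cH^n_{(1,1)}$ --- the homogeneity condition defining $\cP^{1,1}$, the kernel condition imposed by $\De_{2n}$, and the restriction to the sphere --- one at a time.

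First I would identify $\cP^{1,1}$ concretely. A polynomial $h(z,\overline z)$ in $z_1,\dots,z_n,\overline z_1,\dots,\overline z_n$ that satisfies $h(\alp z,\overline{\alp z})=\alp\overline\alp\, h(z,\overline z)$ for every $\alp\in\bC$ is necessarily bihomogeneous of degree $(1,1)$: pairing the two independent scaling conditions (take $\alp=t>0$ and $\alp=e^{i\th}$) forces $h$ to be a $\bC$-linear combination of the monomials $\overline z_i z_j$, i.e.
\[
h(z,\overline z) \;=\; \sum_{i,j=1}^n A_{ij}\,\overline z_i z_j \;=\; \overline z^{\,t} A z
\]
for a uniquely determined $n\times n$ complex matrix $A$.

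Second I would compute the Laplacian. Writing $z_j=x_j+iy_j$ one checks the standard identity $\pa_{x_j}^2+\pa_{y_j}^2=4\pa_{z_j}\pa_{\overline z_j}$, so $\De_{2n}=4\sum_{j}\pa_{z_j}\pa_{\overline z_j}$ on $\bR^{2n}$ regarded as $\bC^n$. Applied to $\overline z^{\,t}Az$ this yields
\[
\De_{2n}\!\left(\overline z^{\,t}Az\right)\;=\;4\sum_{j=1}^n A_{jj}\;=\;4\,\mathrm{tr}(A).
\]
Hence $h\in\mathrm{Ker}\,\De_{2n}|_{\cP^{1,1}}$ if and only if the associated matrix $A$ is traceless.

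Third I would verify that different traceless matrices give different restrictions to $\bS^{2n-1}$. If $\overline z^{\,t}Az=0$ for every $z\in\bS^{2n-1}$, then by the $(1,1)$-homogeneity the same holds for every $z\in\bC^n$, and a routine polarization (evaluating on $z+w$, $z+iw$, etc.) recovers all matrix entries, giving $A=0$. Combined with the previous step this yields the claimed bijective parametrization of $\cH^n_{(1,1)}$ by traceless matrices.

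I do not anticipate a serious obstacle: once the complex form of $\De_{2n}$ is written down, the computation $\De_{2n}(\overline z^{\,t}Az)=4\,\mathrm{tr}(A)$ is immediate, and the only point requiring a line of care is the injectivity of $A\mapsto \overline z^{\,t}Az|_{\bS^{2n-1}}$, handled by polarization. The hypothesis $n\geq 3$ is not needed for this lemma in isolation, but matches the running dimensional assumption of the paper.
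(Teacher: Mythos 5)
Your proposal is correct and follows essentially the same route as the paper: identify $\cP^{1,1}$ with the span of the monomials $\overline z_i z_j$, write $h^{(1,1)}(z,\overline z)=\overline z^{\,t}Az$, and observe that $\Delta_{2n}h^{(1,1)}=4\,\mathrm{tr}(A)$, so harmonicity is equivalent to tracelessness. The paper leaves the Laplacian computation to ``direct inspection'' and omits the injectivity of the restriction to the sphere, both of which you supply explicitly; no substantive difference.
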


\begin{proof}
$h^{(1,1)}$ is of first-degree in each variables so
 $h^{(1,1)}(z,\overline z)=\overline z^t A z$ for some $n\times n$ matrix $A$.
$\Delta_{2n} h^{(1,1)}=0$ is equivalent to $trA=0$ as one verifies by direct inspection.
\end{proof}

\noindent For $n\geq 3$, there is a special class of spherical harmonics in $\cH^n_j$ that are parametrized by elements $t\in \bS^{2n-1}$ \cite{armgenfg}.

\begin{definition}\label{zonali}
For $n\geq 3$, the {\bfseries zonal spherical harmonics} are elements of $\cH^n_j$ defined, for every $t \in \bC^n$, as
\begin{equation}\label{zonal}
F_{n,t}^j(u):= R^n_j \left( \overline{u}^t\cdot t \right) \quad \forall u\in \bS^{2n-1},
\end{equation}
where the polynomials $R^n_j(z)$ have the generating function
\begin{equation}\label{gen1}
(1-\xi z- \eta \overline z +\xi \eta )^{1-n}=\sum_{p,q=0}^{+\infty} R^n_{p,q}(z) \xi^p \eta^q 
\end{equation}
with $|z|\leq 1$, $|\eta|<1$, $|\xi|<1$.
\end{definition}
\noindent These zonal spherical harmonics  are a generalization of the eigenfunctions of orbital angular momentum with $L_z$-eigenvalue $m=0$.
From (\ref{gen1}) we get two identities useful later: 
$$p!q!R^n_{p,q}(1)=(-1)^{p+q}(n-1)n(n+1)\cdot \cdot \cdot (n+p-2)(n-1)n(n+1)\cdot \cdot \cdot (n+q-2),$$
\begin{equation}\label{pq}
p!q!R^n_{p,q}(0)=(-1)^p\delta_{pq}p!(n-1)n(n+1)\cdot \cdot \cdot (n+p-2).\quad\quad\quad\quad\quad\quad\quad\quad\quad\quad
\end{equation}

\section{Generalized complex Harmonics and Frame Functions}

To prove our main statement in the next section we need the following preliminary technical result that relies on the technology 
presented in Chapter 7 of \cite{BR}.

\begin{proposition} \label{lemma2} 
If $f\in \cL^2(\bS^{2n-1},d\nu_n)$,  each projection $f_j$ on $\cH_j^n$  
verifies, $\mu$ being the Haar measure on $U(n)$ normalized to $\mu(U(n))=1$: 
\beq
f_j(u) = dim(\cH^n_j) \int_{U(n)} tr(\overline{D^j(g)}) f(g^{-1} u) d\mu(g)\quad \mbox{a.e. in $u$ with respect to $\nu_n$,} \label{newform}
\eeq
where the right-hand side is a continuous function of $u \in \bS^{2n-1}$.\\
If $f\in \cL^2(\bS^{2n-1},d\nu_n)$ is a frame function, then 
$f_j$  (possibly re-defined on a zero-measure set in order to be continuous)
is a frame function as well with $W_{f_j}=0$ when $j\not = (0,0)$.
\end{proposition}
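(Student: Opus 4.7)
The plan is to recognise equation~(\ref{newform}) as the standard Peter--Weyl formula for the orthogonal projector onto the isotypic component of an irreducible representation of a compact group. By Lemma~\ref{lemmasc} the unitary representation $D_n$ of the compact group $U(n)$ on $L^2(\bS^{2n-1},d\nu_n)$ is strongly continuous, and by Theorem~\ref{teodec} the decomposition $L^2=\bigoplus\cH^n_j$ is precisely the isotypic decomposition (different $j$'s being pairwise inequivalent). The general projector formula (Chapter~7 of \cite{BR}) then yields, as a Bochner integral in $L^2$,
\begin{equation*}
 f_j \;=\; dim(\cH^n_j)\int_{U(n)} \overline{\chi_j(g)}\, D_n(g)f\, d\mu(g), \qquad \chi_j(g):=tr\bigl(D^j_n(g)\bigr).
\end{equation*}
Unfolding $(D_n(g)f)(u)=f(g^{-1}u)$, using $\overline{tr\, A}=tr\,\overline{A}$, and invoking Fubini produces formula~(\ref{newform}) for $\nu_n$-a.e.~$u$.

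To upgrade the right-hand side to a continuous function of $u$, I would pass to the covering group. Identify $\bS^{2n-1}\simeq U(n)/U(n-1)$ with a base point $u_0$, and lift $f$ to the right $U(n-1)$-invariant function $\tilde f(k):=f(k u_0)$ in $L^2(U(n),d\mu)$. Writing $u=h u_0$ and substituting $k:=g^{-1}h$, the right-hand side of~(\ref{newform}) rewrites as a convolution on $U(n)$:
\begin{equation*}
F(h)\;:=\;dim(\cH^n_j)\int_{U(n)}\overline{\chi_j(h k^{-1})}\,\tilde f(k)\,d\mu(k).
\end{equation*}
Since $\chi_j$ is polynomial in the matrix entries of $g$, it is uniformly continuous on the compact group $U(n)$; Cauchy--Schwarz against the fixed $\tilde f\in L^2$ then gives $|F(h_1)-F(h_2)|\to 0$ as $h_1\to h_2$. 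Pushing back to the quotient, continuity of the right-hand side of~(\ref{newform}) on $\bS^{2n-1}$ follows, and this continuous version is the canonical representative of the $L^2$ class $f_j$.

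For the frame-function assertion, fix any Hilbertian basis $N=\{e_1,\dots,e_n\}$ of $\cH=\bC^n$. For every $g\in U(n)$ the set $\{g^{-1}e_k\}_{k=1}^n$ is again a Hilbertian basis, so by hypothesis $\sum_{k=1}^n f(g^{-1}e_k)=W_f$ independently of $g$. Evaluating the continuous $f_j$ at each $e_k$ via~(\ref{newform}) and exchanging the finite sum with the integral gives
\begin{equation*}
\sum_{k=1}^n f_j(e_k)\;=\;dim(\cH^n_j)\,W_f\int_{U(n)}\overline{\chi_j(g)}\,d\mu(g).
\end{equation*}
By Schur orthogonality the last integral equals the multiplicity of the trivial representation in $D^j_n$: namely $1$ when $D^j_n$ is trivial, i.e.\ $j=(0,0)$ (in which case $dim(\cH^n_{(0,0)})=1$, as $\cH^n_{(0,0)}$ consists of constants), and $0$ otherwise. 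This proves that $f_j$ is a frame function with $W_{f_j}=W_f\,\delta_{j,(0,0)}$. The only delicate point in the whole argument is securing the pointwise continuity of the right-hand side of~(\ref{newform}) via the lift to $U(n)$; once that is in place, Schur orthogonality disposes of the frame-function claim with no further effort.
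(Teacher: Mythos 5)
Your proposal is correct, and its skeleton coincides with the paper's: the Peter--Weyl projection onto the isotypic components (which, by part (c) of Theorem \ref{teodec}, are exactly the $\cH^n_j$), followed by character orthogonality to kill the weight for $j\neq(0,0)$. The two technical sub-steps are, however, executed differently. For formula (\ref{newform}) itself you invoke the general Bochner-integral projector formula and Fubini; the paper instead proves it by hand, first for $f$ with finitely many harmonic components (where it is a direct consequence of the Peter--Weyl orthogonality relations together with the inequivalence of the $D^j_n$) and then by an $L^2\to L^1$ limiting argument --- your route is shorter but leans on the cited reference, the paper's is self-contained. For the continuity of the right-hand side you lift to the group, rewriting the integral as the convolution $\int_{U(n)}\overline{\chi_j(hk^{-1})}\,\tilde f(k)\,d\mu(k)$ with $\tilde f(k)=f(ku_0)\in L^2(U(n),d\mu)$ and using uniform continuity of the character; the paper instead uses a local smooth section $u\mapsto g_u$ of $U(n)\to U(n)/U(n-1)$ to shift the $u$-dependence into the character and applies dominated convergence. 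Your convolution argument is arguably cleaner and delivers for free that the integral is defined for \emph{every} $u$ (not merely a.e.), which is exactly what is needed when evaluating at the basis vectors $e_k$ in the frame-function step; the paper secures that point separately via transitivity of the $U(n)$-action. The frame-function computation is the same in both: exchange the finite sum over the basis with the integral, use $\sum_k f(g^{-1}e_k)=W_f$ for every $g$, and note that $\int\overline{\chi_j}\,d\mu=\delta_{j,(0,0)}$ for an irreducible character (equivalently, as the paper phrases it, that the constant $1=D^{(0,0)}$ is orthogonal to the matrix coefficients $D^j_{mm'}$ for $j\neq(0,0)$); the only point worth making explicit, which the paper does via Lemma \ref{lemma0}, is that the continuous representative of $f_j$ agrees with the right-hand side of (\ref{newform}) at every point, not just almost everywhere.
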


\begin{proof} First of all notice that, if $f\in \cL^2(\bS^{2n-1},d\nu_n)$, the right-hand side of (\ref{newform}) is well defined and continuous as we go to prove. 
$U(n) \ni g \mapsto tr(\overline{D^j(g)})$ is continuous -- and thus bounded since $U(n)$ is compact -- in view of lemma \ref{lemmasc} and  $dim(\cH_j^n)$
is finite for theorem \ref{teodec}. Furthermore, for almost all   $u \in \bS^{2n-1}$ the map $U(n) \ni g \mapsto f(g^{-1}u)$ is $\cL^2(U(n), d\mu)$ --
and thus $\cL^1(U(n), d\mu)$ because the measure is finite -- as follows by Fubini-Tonelli theorem and the invariance of $\nu_n$ under $U(n)$, it being
$$\int_{U(n)} d\mu(g) \int_{\bS^{2n-1}}  |f(g^{-1} u)|^2 d\nu_n(u) = \int_{U(n)} d\mu(g) \int_{\bS^{2n-1}}  |f(u)|^2 d\nu_n(u) = \mu(U(n)) ||f||_2 <+\infty\:.$$ 
Consequently, in view of the fact that $\mu$ is invariant, and $U(n)$ transitively acts on $\bS^{2n-1}$,
 the map $U(n) \ni g \mapsto f(g^{-1}u)$ is $\cL^2(U(n), d\mu)$ (and thus $\cL^1(U(n), d\mu)$) for all $u\in \bS^{2n-1}$.
Continuity in $u$ of the right-hand side of (\ref{newform})  can be proved as follows. Let   $u_0 = [I] \in \bS^{2n-1} \equiv U(n)/U(n-1)$. 
Since $U(n)$ and $U(n-1)$
are Lie groups, for any fixed $u_1 \in \bS^{2n-1}$ there is an open neighbourhood $W_{u_1}$ of $u_1$ and a 
smooth map $W_{u_1} \ni u \mapsto g_u \in U(n)$ such that $[g_u]=u$  (Theorem 3.58 in \cite{Warner}).  As a consequence $g_uu_0 = [g_uI] =[g_u]=u$.  
 Therefore, using the invariance of the Haar measure and for $u = g_uu_0\in W_{u_1}$:
 $$ \int_{U(n)} tr(\overline{D^j(g)}) f(g^{-1} u) d\mu(g) =  \int_{U(n)} tr(\overline{D^j(g_u g)}) f(g^{-1}u_0) d\mu(g)\:.$$
 Since $W_{u_1}\times U(n) \ni (u,g) \mapsto tr(\overline{D^j(g_u g)})$ is continuous due to lemma \ref{lemmasc}, 
 the measure is finite and $g \mapsto f(g^{-1}u_0)$ is integrable, 
 Lebesgue dominated convergence theorem implies that, as said above, $W_{u_1} \ni u \mapsto \int_{U(n)} tr(\overline{D^j(g_u g)}) f(g^{-1}u_0) d\mu(g)$ 
 is continuous in $u_1$
 and thus everywhere on $\bS^{2n-1}$ since $u_1$ is arbitrary.\\
Let us pass to prove (\ref{newform}) for $f$ containing a finite number of components. So $F$ is finite, $f\in \cL^2(\bS^{2n-1},d\nu_n)$ and:
$$f(u)= \sum_{j\in F} f_j(u) = \sum_{j\in F} \sum_{m=1}^{dim(\cH_j^n)} f^{j}_m Z_{m}^j(u)\:,\qquad   f^{j}_m \in \bC$$
where $\{Z_{m}^j\}_{m=1,...,dim(\cH_j^n)}$ 
is  an orthonormal basis of $\cH_j^n$, with $Z_n^{(0,0)}=1$, made of continuous functions
(it exists in view of the fact that $\cP^{p,q}$ is a space of polynomials and exploiting  Gramm-Schmitd's procedure).
Then
$$\overline{D^{j_0}_{m_0m'_0}(g)}f(g^{-1}u)= \sum_{j\in F} \sum_{m,m'} \overline{D^{j_0}_{m_0m'_0}(g)}  D^{j}_{mm'}(g)  f^{j}_{m'} Z_{m}^j(u)\:.$$
In view of (c) of theorem \ref{teodec} and Peter-Weyl theorem, taking the integral over $g$ with respect to the Haar measure on $U(n)$ one has:
$$\int \overline{D^{j_0}_{m_0m'_0}(g)}f(g^{-1}u) d \mu(g) =  dim(\cH_{j_0}^n) \:f^{j_0}_{m'_0} Z_{m_0}^{j_0}(u)\:,$$
that implies (\ref{newform}) when taking the trace, that is summing over $m_0=m'_0$. To finish with the first part, let us generalize the obtained formula to the 
case of $F$ infinite. In the following $P_j : L^2(\bS^{2n-1}, \nu_n) \to  L^2(\bS^{2n-1}, \nu_n)$ is the orthogonal projector onto $\cH^n_j$. 
The convergence in the norm $||\:\:||_2$ implies that in the norm 
$||\:\:||_1$, since $\nu_n(\bS^{2n-1})< +\infty$. So if $h_m \to f$ in the norm $||\:\:||_2$, as 
$P_j$ is bounded:
$${\lim_{m\to +\infty}}^{(1)} P_jh_m = {\lim_{m\to +\infty}}^{(2)} P_jh_m
= P_j\left({\lim_{m\to +\infty}}^{(2)}h_m\right) = P_jf\:.$$
We specialize to the case $h_m= \sum_{(p,q)=(0,0)}^{p+q=m} f_{(p,q)}$ so that $h_m \to f$ as $m\to +\infty$ in the norm $||\:\:||_2$. As every $h_m$ 
has a finite number of harmonic components the identity above yields:
$$ dim(\cH^n_j)  {\lim_{m\to +\infty}}^{(1)}\int_{U(n)} tr(\overline{D^j(g)}) h_m(g^{-1} u) d\mu(g) =
P_j f =: f_j\:.$$
Now notice that, as $U(n)\ni g \mapsto tr(\overline{D^j(g)})$ is bounded on $U(n)$ by some $K<+\infty$:
$$\left| \left| \int_{U(n)} tr(\overline{D^j(g)}) h_m(g^{-1} u) d\mu(g) - \int_{U(n)} tr(\overline{D^j(g)}) f(g^{-1} u) d\mu(g)\right|\right|_1$$
$$\leq K \int_{\bS^{2n-1}} d\nu(u) \int_{U(n)} d\mu(g)\left| h_m(g^{-1} u)-  f(g^{-1} u)\right| $$
$$=K \int_{U(n)} d\mu(g)  \int_{\bS^{2n-1}} d\nu(u)\left| h_m(g^{-1} u)-  f(g^{-1} u)\right|$$
$$=K \int_{U(n)} d\mu(g)  \int_{\bS^{2n-1}} d\nu(u)\left| h_m( u)-  f( u)\right| = K \mu(U(n))||h_m-f||_1 \to 0\:.$$
We have found that, as desired, (\ref{newform}) holds for  $f$, because
$$\left| \left|f_j  - dim(\cH^n_j) \int_{U(n)} tr(\overline{D^j(g)}) f(g^{-1} u) d\mu(g) \right| \right|_1=0\:.$$
To prove the last statement, assume $j\neq (0,0)$ otherwise the thesis is trivial (since $f_{(0,0)}$ is a constant function). 
We notice that, when $f_j$ is taken to be continuous (and it can be done in a unique way in view of lemma \ref{lemma0}, referring the the Hilbert basis 
of continuous functions $Z^j_m$ as before), the identity (\ref{newform}) must be true for every $u\in \bS^{2n}$. Therfore, if $e_1,e_2,\ldots, e_n$ is any Hilbert basis of $\bC^n$
$$\frac{1}{dim(\cH^n_j)}\sum_k f_j(e_k) =  \int_{U(n)} tr(\overline{D^j(g)}) \sum_k f(g^{-1} e_k) d\mu(g) =  \int_{U(n)} tr(\overline{D^j(g)}) W_f d\mu(g)=0$$
 because $W_f$ is a constant and thus it is proportional to $1=D^{(0,0)}$ which, in turn, is orthogonal to $D^j_{mm'}$ for $j\neq (0,0)$ in view of Peter-Weyl 
 theorem and (c) of theorem \ref{teodec}.
\end{proof}

\section{The main result} If $\cH$ is a finite-dimensional complex Hilbert space $\cH$, with $\dim(\cH)=n\geq 3$,
there is only a regular Borel measure, $\nu_\cH$, on $\bS(\cH)$ which is left-invariant under the natural action of every unitary operator
 $U : \cH \to \cH$ and
$\nu_\cH(\bS(\cH))=1$.
It is the $U(n)$-invariant measure $\nu_n$ induced by any identification of $\cH$ with a corresponding $\bC^n$ obtained by fixing 
a orthonormal basis in $\cH$. The uniqueness of $\nu_\cH$ is due to the fact that different orthonormal bases are connected by means of transformations in $U(n)$.

\begin{theorem}\label{ZAZ}
If  $f: \bS(\cH)\to \bC$ is  a  frame function on a finite-dimensional complex Hilbert space $\cH$, with $\dim(\cH) \geq 3$ and
$f \in \cL^2(\bS(\cH) ,d\nu_\cH)$, then
there is a unique linear operator $A : \cH \to \cH$ such that:
\begin{equation}
f(z)=\langle z|A z \rangle \quad \forall z\in \bS(\cH),
\end{equation}
where $\langle \,\, |\,\,\rangle$ is the inner product in $\cH$. $A$ turns out to be Hermitean if $f$ is real.
\end{theorem}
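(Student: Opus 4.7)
The plan is to decompose $f$ into the irreducible generalized spherical harmonic components $\cH^n_{(p,q)}$ under the $U(n)$-action and to show that only the components with $(p,q)=(0,0)$ or $(p,q)=(1,1)$ can be nonzero; these two surviving pieces together assemble into a quadratic form $\langle z|Az\rangle$.

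First I write $f = \sum_{j} f_{j}$ in $L^2(\bS(\cH), d\nu_\cH)$ via Theorem \ref{teodec} and pick the continuous representative of each $f_j$ furnished by Proposition \ref{lemma2}, which also guarantees that each such $f_j$ is itself a frame function with $W_{f_j}=0$ whenever $j \neq (0,0)$. For any such $j$ and any $t \in \bS^{2n-1}$, completing $t$ to an orthonormal basis $\{t,e_2,\dots,e_n\}$ I read off from the frame identity $f_j(t) = -\sum_{k=2}^n f_j(e_k)$; since the right-hand side does not depend on the particular ONB of $t^\perp$, averaging over $U(n-1)$ yields
\begin{equation*}
f_j(t) \;=\; -(n-1)\,(T_jf_j)(t), \qquad (T_jg)(t) := \int_{\bS(t^\perp)} g \, d\nu_{n-1}.
\end{equation*}

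Next I recognise $T_j$ as a $U(n)$-equivariant operator which, because the summands in Theorem \ref{teodec} are mutually inequivalent, sends $\cH^n_j$ into itself; by the irreducibility of $\cH^n_j$ together with Schur's lemma, $T_j = c_j\,\mathrm{Id}$ for some scalar $c_j$. To pin down $c_j$ I evaluate $T_j$ on the zonal harmonic $F^j_{n,t_0}$ of Definition \ref{zonali} at $t=t_0$: the identity $\overline{e}^{\,t} t_0 = 0$ for every $e \in \bS(t_0^\perp)$, combined with \nref{zonal}, gives $c_j R^n_j(1) = R^n_j(0)$. Substituting into the displayed equation, $(1+(n-1)c_j)f_j \equiv 0$ forces $f_j \equiv 0$ unless $(p,q)=(1,1)$: by \nref{pq} one has $R^n_{p,q}(0)=0$ when $p\neq q$, while for $p=q\geq 2$ the same formulas give $|(n-1)c_{(p,p)}| = p!/[n(n+1)\cdots(n+p-2)]<1$ for $n\geq 3$, so $1+(n-1)c_{(p,p)}\neq 0$; only at $(p,q)=(1,1)$ do the identities give $c_{(1,1)} = -1/(n-1)$, leaving the equation vacuous.

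Hence $f = f_{(0,0)} + f_{(1,1)}$ (almost everywhere on $\bS(\cH)$, and after the continuous redefinition sanctioned by Proposition \ref{lemma2}, everywhere). The $(0,0)$-piece is a constant $c\in\bC$ and by Lemma \ref{lemma1} the $(1,1)$-piece has the form $\overline z^t A_0 z = \langle z|A_0 z\rangle$ for a unique traceless operator $A_0$; using $\|z\|=1$ on $\bS(\cH)$ to rewrite $c = \langle z|(cI)z\rangle$ and setting $A := cI + A_0$ produces $f(z) = \langle z|Az\rangle$. Uniqueness of $A$ (and the fact that $A=A^*$ when $f$ is real) follows from polarization in the complex Hilbert space $\cH$. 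The main obstacle in this plan is the Schur-lemma identification of $T_j$ as a scalar on $\cH^n_j$ together with the explicit extraction of $c_j$ from the generating function \nref{gen1} via the identities \nref{pq}; once that scalar is in hand, the case-by-case vanishing reduces to the routine comparison $|(n-1)c_{(p,p)}|<1$ for $p\geq 2$.
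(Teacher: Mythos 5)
Your proposal is correct, and its skeleton coincides with the paper's: the same harmonic decomposition $f=\sum_j f_j$ from Theorem \ref{teodec}, the same use of Proposition \ref{lemma2} to make each $f_j$ a continuous frame function with $W_{f_j}=0$ for $j\neq(0,0)$, the same evaluation on zonal harmonics via the identities \nref{pq}, Lemma \ref{lemma1} for the $(1,1)$-piece, and polarization for uniqueness and Hermiticity. The one genuine difference is how irreducibility is exploited in the middle: the paper lets the $U(n)$-orbit of a nonzero $f_j$ span $\cH^n_j$ and deduces that the evaluation functional $Z\mapsto\sum_k Z(e_k)$ annihilates all of $\cH^n_j$, in particular $F^j_{n,e_1}$, giving $R^n_j(1)+(n-1)R^n_j(0)=0$; you instead recast the frame identity as the eigenvalue equation $f_j=-(n-1)T_jf_j$ for the sub-sphere averaging operator, identify $T_j=c_j\,\mathrm{Id}$ on $\cH^n_j$ by equivariance, inequivalence of the summands and Schur's lemma, and compute $c_j R^n_j(1)=R^n_j(0)$ on the zonal harmonic --- which is exactly the same numerical condition. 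Your route buys a transparent quantitative criterion, $|(n-1)c_{(p,p)}|=p!/[n(n+1)\cdots(n+p-2)]<1$ for $p\geq 2$, which shows precisely where $\dim(\cH)\geq 3$ enters (at $n=2$ the ratio equals $1$); the paper's route avoids the small verifications you leave implicit, namely that $T_j$ applied to a continuous function is again a (continuous, square-integrable) function and that it intertwines the representation \nref{rep} --- both routine with Lemma \ref{lemmasc} and the local-section argument already used in Proposition \ref{lemma2}. Your closing passage from the a.e.~identity $f=f_{(0,0)}+f_{(1,1)}$ to the pointwise statement is handled at the same level of detail as in the paper itself, so it is not a defect of your argument relative to the original.
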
    

\noindent{\bf Remark}. Since $\nu_\cH$ is finite,  $f \in \cL^2(\bS(\cH) ,d\nu_\cH)$ holds in particular when
$f \in \cL^p(\bS(\cH),d\nu_\cH)$ for some $p \in [2,+\infty]$, as a classic result based on Jensen's inequality.\\

\begin{proof} 
We start from the uniqueness issue. Let $B$ be another  operator  satisfying the thesis,
so that  $\langle z|(A-B)z\rangle =0$ $\forall z \in \cH$.
Choosing $z=x+y$ and then $z=x+iy$ one finds $\langle x|(A-B)y\rangle =0$ for every $x,y \in \cH$, that is
$A=B$. We pass to the existence of $A$ identifying $\cH$ to $\bC^n$  by means of an orthonormal basis $\{e_k\}_{k=1,...,n}\subset \cH$.
As $f\in \cL^{2}(\bS^{2n-1},d\nu_n)$, $f$ can be decomposed as:
 $f=\sum_j f_j$ with $f_j\in \cH_j^n$. Lemma \ref{lemma2} implies  that, if $g\in U(n)$:
\begin{equation}\label{D}
\sum_{k=1}^n \left ( D^j(g) f_j \right )(e_k)=\sum_{k=1}^n  f_j(g^{-1}e_k)=0 \quad \mbox{if} \quad j\not = (0,0)\:
\end{equation}  
 Assuming $f_j\not =0$, since the representation $D^j$ is irreducible, the subspace of $\cH_j^n$ 
 spanned by all the vectors $D^j(g)f_j\in \cH_j^n$ is  dense in  $\cH_j^n$ when $g$ ranges in $U(n)$. As $\mathcal H_j^n$ 
 is finite-dimensional, the dense subspace is  $\cH_j^n$ itself. So it must be $\sum_{k=1}^n Z(e_k)=0$ for every $Z\in \cH_j^n$.
  In particular it holds for the  zonal spherical harmonic $F_{n,e_1}^j$ 
 individuated by $e_1$:
$\sum_{k=1}^nF_{n,e_1}^j(e_k)=0$.
By definition of zonal spherical harmonics the above expression can be written in these terms: $R^n_{p,q}(1)+(n-1)R^n_{p,q}(0)=0$,
and using relations (\ref{pq}):
$$(-1)^{p+q}(n-1)n(n+1)\cdot \cdot \cdot (n+p-2)(n-1)n(n+1)\cdot \cdot \cdot (n+q-2)=$$
\begin{equation}\label{pqp}
=(-1)^p\delta_{pq} p! (n-1)^2n(n+1)\cdot \cdot \cdot (n+p-2).\quad \quad \quad \quad \quad \quad \quad \quad \quad
\end{equation}
(\ref{pqp}) implies $p=q$. Indeed, if $p\not =q$  the right hand side vanishes, while the left 
does not.
Now, for $n\geq 3$ and $j\not =(0,0)$ we can write:
\begin{equation}\label{np}
(n-1)^2n^2(n+1)^2\cdot \cdot \cdot (n+p-2)^2=(-1)^p p!(n-1)^2 n(n+1)\cdot \cdot \cdot (n+p-2).
\end{equation}
The identity (\ref{np}) is verified if and only if $p=1$. In view of lemma \ref{lemma1}, we know that the functions
 $f_{(1,1)}\in \cH^n_{(1,1)}$ have form $f(x)= \langle x|A_0x\rangle$ with $trA_0=0$.
We conclude that our frame function $f$ can only have the form:
$$f(x)=c+f_{(1,1)}(x)= \langle x|cIx\rangle +\langle x|A_0x\rangle =\langle x|A x\rangle  \quad x	\in \bS^{2n-1}\:.$$
If $f$ is real valued  $\langle x|Ax\rangle = \overline{\langle x|Ax\rangle} = \langle x|A^*x\rangle$ and thus
$\langle x|(A-A^*)x\rangle =0$. Exploiting the same argument as that used in the proof of the uniqueness, we conclude that
$A=A^*$.
\end{proof}

\section*{Acknowledgements}
The authors are grateful to Alessandro Perotti for useful comments and suggestions.


\begin{thebibliography}{References}

\bibitem[BR00]{BR} A.O. Barut and R. Raczka. {\em Theory of Group Representations and Applications}.
World Scientific, Singapore,  2000
\bibitem[Dvu92]{Dvu} A. Dvure$\check c$enskij. \emph{Gleason's theorem and its applications} (Kluwer academic publishers, 1992).

\bibitem[Gle57]{Gleason} A. M. Gleason. \emph{Measures on the closed subspaces of a Hilbert space}, Journal of Mathematics and Mechanics, Vol.6, No.6, 885-893 (1957).

\bibitem[JW77]{armgen}K. D. Johnson and N. R. Wallach.   {\em Composition series and intertwining
operators for spherical principal series I.},  Trans. Am. Math. Soc. {\bf 229}, 137-173 (1977)
Volume 229, 1977.

\bibitem[Mak51]{} G. W. Mackey. \emph{Induced representations of locally compact groups I}, The Annals of Mathematics, Second Series, Vol.55, No.1, 101-139 1951.

\bibitem[Rud80]{rud} W. Rudin \emph{Function Theory in the Unit Ball of $\bC^n$} (Springer, 1980)

\bibitem[Rud86]{rudin} W. Rudin \emph{Real and complex analysis} (McGrow-Hill Book Co. 1986).


\bibitem[War83]{Warner} F.W. Warner. {\em  Foundations of Differentiable Manifolds and Lie Groups}, Springer, Berlin,  1983


\bibitem[Wat00]{armgenfg} S. Watanabe. {\em Spherical Harmonics on $U(n)/U(n-1)$ and Associated
Hilbert Spaces of Analytic Functions}, Tokyo  J. Math
{\bf  23}, 352-360 2000.
\end{thebibliography}
\end{document}